\newtheorem{lemma}{Lemma}
\newtheorem{corollary}{Corollary}
\newtheorem{proposition}{Proposition}
\begin{document}
%
\title{Joint Transmission and Computing Scheduling for Status Update with Mobile Edge Computing}

\author{\IEEEauthorblockN{Jie Gong\IEEEauthorrefmark{1}, Qiaobin Kuang\IEEEauthorrefmark{2} and Xiang Chen\IEEEauthorrefmark{2}\\}
\IEEEauthorblockA{\IEEEauthorrefmark{1} School of Data and Computer Science, Sun Yat-sen University, Guangzhou 510006, China}
\IEEEauthorblockA{\IEEEauthorrefmark{2} School of Electronics and Information Technology, Sun Yat-sen University, Guangzhou 510006, China}
Email: gongj26@mail.sysu.edu.cn}


\maketitle

\begin{abstract}
Age of Information (AoI), defined as the time elapsed since the generation of the latest received update, is a promising performance metric to measure data freshness for real-time status monitoring. In many applications, status information needs to be extracted through computing, which can be processed at an edge server enabled by mobile edge computing (MEC). In this paper, we aim to minimize the average AoI within a given deadline by jointly scheduling the transmissions and computations of a series of update packets with deterministic transmission and computing times. The main analytical results are summarized as follows. Firstly, the minimum deadline to guarantee the successful transmission and computing of all packets is given. Secondly, a \emph{no-wait computing} policy which intuitively attains the minimum AoI is introduced, and the feasibility condition of the policy is derived. Finally, a closed-form optimal scheduling policy is obtained on the condition that the deadline exceeds a certain threshold. The behavior of the optimal transmission and computing policy is illustrated by numerical results with different values of the deadline, which validates the analytical results.
\end{abstract}


%
\IEEEpeerreviewmaketitle

\section{Introduction}
With the increasing demand of real-time status update applications such as autonomous driving, virtual reality and etc., age of information (AoI)  \cite{kaul2012real} is introduced as an effective data freshness metric, which is defined as the time elapsed since the generation of the latest received update. Recently, the impact of computing on AoI~\cite{alabbasi2018joint, arafa2019timely, zou2019trading, kuang2019age, gong2019reducing, zhong2019age, song2019age} is drawing more and more attention, as in many applications, the information embedded in a status update packet is not revealed until being processed. Due to the limited computing capacity of mobile devices, computing tasks for extracting information from status update packets are usually offloaded to the core network. As mobile edge computing (MEC) \cite{mao2017survey} can provide sufficient computing resources at the network edge, it is expected that the status update packets can be processed by an edge server while maintaining a low AoI. Since offloading a computing task includes transmission and computing, how to jointly optimize both procedures is a crucial problem.

The impact of computing on AoI was initially considered in~\cite{alabbasi2018joint}, where the computing tasks were scheduled in the central cloud. The scheduling policy for update cloud computing ignoring transmission time was studied in~\cite{arafa2019timely}. In~\cite{zou2019trading}, the tradeoff between computation and transmission was analyzed where each packet is pre-processed before being transmitted. When MEC is considered, the average AoI with exponential transmission time and service time was analyzed in~\cite{kuang2019age, gong2019reducing} for single user case. For multiple users, an optimal work-conserving scheduling policy was proposed in~\cite{zhong2019age}. A novel performance metric, age of task (AoT), was proposed in~\cite{song2019age} where task scheduling, computation offloading and energy consumption were jointly considered. Most of the existing computation related AoI analysis assumed a random computing time. However, in practice, computing time is usually fixed or predictable based on the volume of a task and the server capacity. When the data rate is fixed via some rate control mechanism, the transmission time is also fixed. This paper focuses on joint transmission and computing considering deterministic transmission time and computing time.

Since an MEC system can be viewed as a two-hop network, where the first hop is transmission and the second hop is computing, there were a lot of research efforts on multi-hop networks that can be referred for AoI analysis. In particular, Ref.~\cite{bedewy2017age} analyzed the optimality of the Last-Come-First-Serve (LCFS) queuing principle. The age-of-information for multi-flow multi-hop networks with interference was studied in~\cite{talak2017minimizing}. A useful tool named stochastic hybrid systems (SHS) was introduced in~\cite{yates2018age} showing the average AoI of a multi-hop line network with preemptive servers and Poisson arrivals. Nevertheless, these works still assumed random service time in each hop, and there still lack research efforts on deterministic transmission and computing times. A most closely related work studied the optimal offline scheduling policy in energy harvesting two-hop relay networks~\cite{ahmed2017age}. Different from relay networks where the two hops can not transmit simultaneously or may interfere with each other otherwise, transmission and computation can be scheduled at the same time.

In this paper, we study the average AoI minimization problem for transmitting and computing a set of packets before a given deadline. Each packet consumes a fixed transmission time and a fixed computing time. To minimize the average AoI, each packet should be transmitted upon its generation. The optimization variables include packet generation time instants and computation start time instants. We analyze the feasibility condition of the deadline to guarantee that all the packets can be successfully transmitted and computed. Then, a closed-form optimal scheduling policy is obtained when the deadline is sufficiently large. With moderate value of the deadline, the optimal solution can be found by the standard convex optimization algorithms. Numerical results illustrate the different behaviors of AoI curve with different deadlines.

\section{System Model} \label{sec:model}
As shown in Fig.~\ref{fig:system}, we consider an MEC-based status update system which is composed of a transceiver and an edge server. An update packet is generated by the source, and then transmitted through the channel to the remote edge server. The real-time status information embedded in the packet is exposed to the destination after being processed at the edge server. Assume the channel can only transmit a single packet at a time, and the edge server can only compute one packet at a time as well. As long as a packet arrives at the edge server, an acknowledgement will be sent back to the source, which can then generate a new packet. The packets arriving at the edge server can be computed if the server is idle. Otherwise, is has to be buffered to wait.

\begin{figure}[t]
\centering
\includegraphics[width=3.4in]{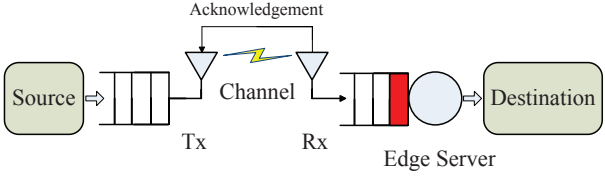}
\caption{Status update system with MEC.} \label{fig:system}
\end{figure}

We focus on the freshness of the status information, which is measured by the AoI, defined as
\begin{align}
\Delta(t) \coloneqq t - U(t),
\end{align}
where $U(t)$ is the generation time of the latest received packet at the destination. Denote
\begin{align}
\Delta_T \coloneqq \int_0^T \Delta(t)\mathrm d t.
\end{align}
Then, the average AoI for a given time period $T$ can be calculated as
\begin{align}
\bar{\Delta} \coloneqq \frac{1}{T}\Delta_T.
\end{align}

In this paper, we study a deterministic status update problem. In particular, there are a total of $N$ packets to be sent and processed within a given deadline $T$. The transmission time of the $k$-th packet is $T_k$ and its computing time is $C_k$. Denote $t_k$ as the time instant when the $k$-th packet is generated and transmitted. As only one packet can be transmitted at a time, we have
\begin{align}
t_k \ge t_{k-1} + T_{k-1}, \quad 2 \le k \le N.\nonumber
\end{align}
At the edge server, the $k$-th packet is processed from time instant $c_k$ after its reception. Since the edge server can only process a single packet at a time, we have
\begin{align}
c_k \ge c_{k-1} + C_{k-1}, \quad 2 \le k \le N.\nonumber
\end{align}
In addition, each packet can not be processed until it is received by the edge server, i.e.,
\begin{align}
c_k \ge t_k + T_k, \quad 1 \le k \le N.\nonumber
\end{align}
Finally, all the packets should be transmitted and processed before the deadline $T$, i.e.,
\begin{align}
c_N + C_N \le T.\nonumber
\end{align}

\begin{figure}[t]
\centering
\includegraphics[width=3.4in]{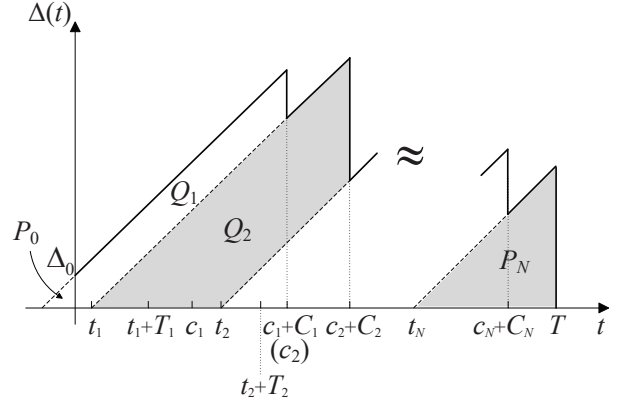}
\caption{A sample path of AoI with MEC.} \label{fig:aoi}
\end{figure}

In Fig.~\ref{fig:aoi}, a sample path of AoI with $N$ packets is depicted. We set the initial age as $\Delta(0) = \Delta_0$. The term $\Delta_T$ can be calculated as the area below the curve, which can be given by the summation of the areas of the trapezoids $Q_1, \cdots, Q_N$ plus the area of the triangle $P_N$ and minus the area of the triangle $P_0$, i.e.,
\begin{align}
\Delta_T = &\sum_{k=1}^N \frac{1}{2}\left[ (c_k+C_k-t_{k-1})^2 - (c_k+C_k-t_{k})^2\right] \nonumber\\
&\qquad + \frac{1}{2}(T-t_N)^2 - \frac{1}{2}\Delta_0^2, \label{eq:delta}
\end{align}
where we denote $t_0 = -\Delta_0$. Our objective is to minimize the average AoI by optimizing the packet transmission time instants $t_1, \cdots, t_N$ and the computing time instants $c_1, \cdots, c_N$. Since the time length $T$ is fixed, the problem can be equivalently formulated as follows:
\begin{subequations}\label{eq:prob}
\begin{align}
\min_{\mbox{\tiny$\begin{array}{c} t_1, \cdots, t_N \\ c_1, \cdots, c_N \end{array}$}} & \Delta_T\\
\mathrm{s.t.}\quad &t_1 \ge 0, \label{eq:t1}\\
&t_k \ge t_{k-1} + T_{k-1}, \quad 2 \le k \le N.\label{eq:tk}\\
&c_k \ge c_{k-1} + C_{k-1}, \quad 2 \le k \le N.\label{eq:ck}\\
&c_k \ge t_k + T_k, \quad 1 \le k \le N.\label{eq:ckt}\\
&c_N + C_N \le T. \label{eq:cn}
\end{align}
\end{subequations}
To guarantee that all the packets can be successfully transmitted and processed, the deadline $T$ needs to be long enough. The feasibility condition in terms of $T$ is given below.
\begin{proposition} \label{prop:feasi}
To transmit and compute $N$ packets before the deadline $T$, where each packet $k$ costs transmit time $T_k$ and computing time $C_k$, we have
\begin{align}
T \ge \max_{k \in \{1, \cdots, N \}} \left\{ \sum_{i=1}^k T_i + \sum_{j=k}^N C_j \right\}. \label{eq:Tcond}
\end{align}
\end{proposition}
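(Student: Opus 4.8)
The plan is to show that any schedule satisfying the constraints \eqref{eq:t1}--\eqref{eq:cn} forces $T$ to be at least $\sum_{i=1}^k T_i + \sum_{j=k}^N C_j$ for \emph{every} index $k$, after which taking the maximum over $k$ yields \eqref{eq:Tcond}. The geometric picture guiding the argument is a critical path associated with each $k$: the first $k$ packets must be transmitted one after another, then packets $k$ through $N$ must be computed one after another, and this entire chain has to be completed by the deadline.

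First I would bound the transmission completion time of packet $k$. Unrolling the transmission constraint \eqref{eq:tk} together with \eqref{eq:t1} by induction gives $t_k \ge t_1 + \sum_{i=1}^{k-1} T_i \ge \sum_{i=1}^{k-1} T_i$, so packet $k$ is received no earlier than $t_k + T_k \ge \sum_{i=1}^k T_i$. Combining this with the reception constraint \eqref{eq:ckt} yields the lower bound $c_k \ge \sum_{i=1}^k T_i$ on the instant at which packet $k$ may begin computing.

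Next I would propagate this bound forward through the computing stage. Unrolling the computing constraint \eqref{eq:ck} from index $k$ up to $N$ gives $c_N \ge c_k + \sum_{j=k}^{N-1} C_j$, hence $c_N + C_N \ge c_k + \sum_{j=k}^N C_j$. Substituting the lower bound on $c_k$ obtained above produces $c_N + C_N \ge \sum_{i=1}^k T_i + \sum_{j=k}^N C_j$. The deadline constraint \eqref{eq:cn} then forces $T \ge \sum_{i=1}^k T_i + \sum_{j=k}^N C_j$, and since $k$ was arbitrary, taking the maximum over $k \in \{1, \dots, N\}$ establishes \eqref{eq:Tcond}.

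There is no serious analytical obstacle here; the argument is a chained application of the feasibility constraints. The one point requiring care is the choice of the splitting index: each $k$ isolates a different bottleneck---small $k$ stresses the computing stage through a long tail $\sum_{j=k}^N C_j$, while large $k$ stresses the transmission stage through a long prefix $\sum_{i=1}^k T_i$---so no single $k$ dominates in general and the maximum over all $k$ is essential. If one additionally wishes to argue that \eqref{eq:Tcond} is \emph{tight}, and hence gives the \emph{minimum} feasible deadline, I would exhibit the back-to-back schedule $t_k = \sum_{i=1}^{k-1} T_i$ with $c_k$ chosen as the earliest admissible computing start, $c_k = \max\{c_{k-1}+C_{k-1},\, t_k+T_k\}$, and verify that its completion time $c_N + C_N$ equals the right-hand side of \eqref{eq:Tcond}.
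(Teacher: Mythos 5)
Your proof is correct and follows essentially the same route as the paper: both arguments chain the constraints \eqref{eq:t1}, \eqref{eq:tk}, \eqref{eq:ckt}, \eqref{eq:ck}, \eqref{eq:cn} along the critical path ``transmit packets $1$ through $k$, then compute packets $k$ through $N$.'' The only difference is presentational --- the paper unrolls a backward recursion carrying the maximum, while you fix $k$ and derive each term of the maximum separately, which makes the argument somewhat more explicit than the paper's terse ``applying recursively'' step.
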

\begin{proof}
According to \eqref{eq:ck} and \eqref{eq:ckt}, we have
\begin{align}
c_1 &\ge t_1 + T_1, \label{eq:c1max}\\
c_k &\ge \max\{c_{k-1} + C_{k-1}, t_k + T_k\}, \quad k\ge 2. \label{eq:ckmax}
\end{align}
Therefore, we have based on \eqref{eq:cn}
\begin{align}
T &\ge c_N + C_N \ge \max\{c_{N-1} + C_{N-1}, t_N + T_N\} + C_N \nonumber\\
&= \max\{c_{N-1} + C_{N-1} + C_N, t_N + T_N + C_N\}. \label{eq:T}
\end{align}
By applying \eqref{eq:tk} and \eqref{eq:ckmax} to \eqref{eq:T} recursively and with the boundary conditions \eqref{eq:c1max} and \eqref{eq:t1}, we can obtain \eqref{eq:Tcond}.
\end{proof}

In the case that \eqref{eq:Tcond} is satisfied with equality, the optimal solution for the problem \eqref{eq:prob} is trivial. In particular, the packets are greedily transmitted and computed as long as the channel or the edge server is idle. If the inequality in \eqref{eq:Tcond} strictly holds, the transmission and computing scheduling optimization is detailed in the next section.

\section{Optimal Transmission and Computing Policy}
In this section, we will solve the problem \eqref{eq:prob} to find the optimal time instants to transmit and compute the update packets for a feasible deadline $T$. Firstly, the monotonicity of the objective function is characterized in the following lemma.
\begin{lemma}\label{lemma:relation}
Under the constraints \eqref{eq:t1}-\eqref{eq:cn}, the objective $\Delta_T$ in the problem \eqref{eq:prob} is a non-increasing function of $t_k$ and an increasing function of $c_k$.
\end{lemma}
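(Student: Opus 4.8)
The plan is to treat $\Delta_T$ in \eqref{eq:delta} as a smooth (quadratic) function of all the variables and to establish the two monotonicity claims by computing the relevant partial derivatives, then fixing their signs using the feasibility constraints. Since $\Delta_T$ is a polynomial in the $t_k$'s and $c_k$'s, the derivatives exist everywhere, so the only real work is (i) identifying, for each variable, exactly which trapezoid/triangle terms it enters, and (ii) invoking the appropriate constraint to sign the result on the feasible set.

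For the computing instants, I would first note that $c_k$ occurs only in the $k$-th summand of \eqref{eq:delta}. Differentiating gives $\partial \Delta_T/\partial c_k=(c_k+C_k-t_{k-1})-(c_k+C_k-t_k)=t_k-t_{k-1}$. The constraint \eqref{eq:tk}, together with $t_1\ge 0$ from \eqref{eq:t1} and the convention $t_0=-\Delta_0$, forces $t_k-t_{k-1}>0$ (using positive transmission times and a positive initial age), so $\Delta_T$ is strictly increasing in each $c_k$, as claimed.

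The transmission instants require a little more bookkeeping because $t_k$ appears twice. For $1\le k\le N-1$, $t_k$ enters the $k$-th term as the subtracted argument $t_k$ (with a minus sign) and the $(k+1)$-th term as the added argument $t_{(k+1)-1}=t_k$ (with a plus sign). Collecting both contributions, the linear cross terms cancel and $\partial \Delta_T/\partial t_k=(c_k+C_k)-(c_{k+1}+C_{k+1})$. Constraint \eqref{eq:ck} gives $c_{k+1}\ge c_k+C_k$, whence $c_{k+1}+C_{k+1}\ge c_k+C_k$ and the derivative is $\le 0$. The index $k=N$ is the boundary case I would treat separately: here the role of the ``next'' trapezoid is played by the triangle term $\tfrac12(T-t_N)^2$, giving $\partial \Delta_T/\partial t_N=(c_N+C_N)-T$, which is $\le 0$ by the deadline constraint \eqref{eq:cn}. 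Hence $\Delta_T$ is non-increasing in every $t_k$.

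I expect no genuine difficulty here: the computation is elementary and the three signs fall out immediately from \eqref{eq:tk}, \eqref{eq:ck}, and \eqref{eq:cn}. The one place to be careful --- and the step most likely to trip up a hasty derivation --- is the double appearance of $t_k$ in two consecutive trapezoid areas with opposite signs, together with the fact that the last transmission instant $t_N$ is governed by the final triangle, and hence by the deadline, rather than by a subsequent packet. Keeping those two structural features straight is essentially all the proof demands.
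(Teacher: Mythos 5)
Your proposal is correct and follows essentially the same route as the paper: differentiate \eqref{eq:delta} with respect to each variable, obtain $\partial\Delta_T/\partial c_k=t_k-t_{k-1}$, $\partial\Delta_T/\partial t_k=(c_k+C_k)-(c_{k+1}+C_{k+1})$ for $k\le N-1$, and $\partial\Delta_T/\partial t_N=(c_N+C_N)-T$, then sign these via \eqref{eq:tk}, \eqref{eq:ck}, and \eqref{eq:cn}. The bookkeeping of the double appearance of $t_k$ and the boundary case $k=N$ matches the paper's computation exactly.
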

\begin{proof}
The lemma can be proved by directly taking the derivative of \eqref{eq:delta} with respect to $t_k$ and $c_k$. Specifically, as
\begin{align}
\frac{\partial \Delta_T}{\partial t_k} &= (c_k + C_k) - (c_{k+1} + C_{k+1}) < 0, 1 \le k \le N-1, \nonumber\\
\frac{\partial \Delta_T}{\partial t_N} &= (c_N + C_N) - T \le 0, \nonumber
\end{align}
$\Delta_T$ is a non-increasing function of $t_k$. As
\begin{align}
\frac{\partial \Delta_T}{\partial c_k} = t_k - t_{k-1} \ge T_{k-1} > 0,\quad 1 \le k \le N\nonumber
\end{align}
$\Delta_T$ is an increasing function of $c_k$.
\end{proof}

According to Lemma \ref{lemma:relation}, the optimal solution has the following relation.
\begin{corollary} \label{coro:opt}
The optimal solution for the problem \eqref{eq:prob} $t_1^*, \cdots, t_N^*, c_1^*, \cdots, c_N^*$ must satisfy
\begin{align}
c_1^* &= t_1^* + T_1, \label{eq:c1r}\\
c_k^* &= \max\{c_{k-1}^* + C_{k-1}, t_k^* + T_k\}, \quad 2 \le k \le N-1, \label{eq:ckr}\\
c_N^* &= t_N^* + T_N. \label{eq:cNr}
\end{align}
\end{corollary}
\begin{proof}
Since $\Delta_T$ is an increasing function of $c_k$, $c_1$ is lower bounded $t_1 + T_1$ according to \eqref{eq:ckt}, while $c_k$ is lower bounded by $\max\{c_{k-1} + C_{k-1}, t_k + T_k\}$ according to \eqref{eq:ck} and \eqref{eq:ckt} for $2 \le k \le N-1$. Hence, we have \eqref{eq:c1r} and \eqref{eq:ckr}.

Since $\Delta_T$ is a non-increasing function of $t_k$, and $t_N$ is upper bounded by $c_N - T_N$ according to \eqref{eq:ckt}, we have \eqref{eq:cNr}.
\end{proof}

With the monotonicity and Corollary \ref{coro:opt}, a greedy search algorithm can be developed. In particular, $t_k$ can be maximized by bisection search, and $c_k$ can be sequentially determined by $t_k$ and $c_{k-1}$. Nevertheless, the greedy search is time consuming especially when $N$ is large. We consider to develop more efficient algorithms in some special cases. Intuitively, to minimize the AoI, $t_k$ should be as large as possible while $c_k$ should be as small as possible. It is noticeable that the constraint \eqref{eq:ckt} plays an important role as it contains both an upper bound of $t_k$ and a lower bound of $c_k$. The following lemma shows when \eqref{eq:ckt} is satisfied with equality.

\begin{lemma} \label{lemma:equal}
Suppose the optimal solution for the problem \eqref{eq:prob} is $t_1^*, \cdots, t_N^*, c_1^*, \cdots, c_N^*$. If
\begin{align}
T \ge t_N^* + \max_{1 \le k \le N} \{c_k^* + C_k - t_k^*\}, \label{eq:Topt}
\end{align}
we have
\begin{align}
c_k^* = t_k^* + T_k, \quad 1 \le k \le N. \label{eq:ckstar}
\end{align}
\end{lemma}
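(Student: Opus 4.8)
The plan is to show that the condition \eqref{eq:Topt} forces the constraint \eqref{eq:ckt} to be tight for every $k$ at the optimum, and the natural approach is by contradiction. Suppose the optimal solution satisfies \eqref{eq:Topt} but $c_m^* > t_m^* + T_m$ for some index $m$. By Corollary \ref{coro:opt}, the boundary cases $k=1$ and $k=N$ already have \eqref{eq:ckt} tight, so any violating index $m$ must lie in the range $2 \le m \le N-1$, where, again by Corollary \ref{coro:opt}, $c_m^* = \max\{c_{m-1}^* + C_{m-1}, t_m^* + T_m\}$. A strict inequality $c_m^* > t_m^* + T_m$ therefore means the maximum is attained by the \emph{computing-continuity} term, i.e. $c_m^* = c_{m-1}^* + C_{m-1}$, so the edge server processes packet $m$ immediately after packet $m-1$ with no idle gap.

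The key idea is that when \eqref{eq:Topt} holds, there is enough slack before the deadline to \emph{delay} the generation instant $t_m^*$ (and hence $c_m^*$ via the tight relation) without violating any constraint, and by Lemma \ref{lemma:relation} increasing $t_m$ strictly decreases $\Delta_T$, contradicting optimality. Concretely, I would argue that $t_m^*$ can be pushed up to the point where $c_m^* = t_m^* + T_m$ becomes active, i.e. shift $t_m$ by the positive gap $\delta = c_m^* - (t_m^* + T_m)$. First I would check that this shift respects the transmission ordering \eqref{eq:tk} at index $m+1$, which requires showing the later generation times still fit; here the hypothesis \eqref{eq:Topt}, which bounds the total span $t_N^* + \max_k\{c_k^* + C_k - t_k^*\}$ below $T$, is exactly what guarantees the downstream packets can be rescheduled within the deadline. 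The quantity $c_k^* + C_k - t_k^*$ is precisely the time from generating packet $k$ until it finishes computing, so the maximum over $k$ captures the worst-case tail that must still clear the server by time $T$.

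The main obstacle I anticipate is verifying \textbf{feasibility of the perturbed schedule} rather than the objective decrease, which is immediate from Lemma \ref{lemma:relation}. The delicate point is that delaying $t_m$ may cascade: it can force $c_m$ up, which may in turn force $c_{m+1}, c_{m+2}, \dots$ up through the computing-continuity constraints \eqref{eq:ck}, and these must all still satisfy \eqref{eq:cn}. I would handle this by propagating the perturbation forward and showing that the total finishing time of the last affected packet remains at most $t_N^* + \max_k\{c_k^* + C_k - t_k^*\} \le T$, using condition \eqref{eq:Topt} directly. An alternative, possibly cleaner, route is a \emph{direct constructive} argument: starting from the claim, build a candidate schedule with $c_k = t_k + T_k$ for all $k$ and $t_k$ chosen as large as the constraints permit, then show this candidate is feasible precisely when \eqref{eq:Topt} holds and, by the monotonicity of Lemma \ref{lemma:relation}, is optimal. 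I would lean toward the contradiction argument for brevity, reserving the constructive version as a fallback if the cascade bookkeeping becomes unwieldy.
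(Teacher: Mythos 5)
Your overall strategy---contradiction plus a forward-propagated delay of the violating packet---is the same as the paper's, but you have misplaced where the hypothesis \eqref{eq:Topt} actually does its work, and this leaves a real gap. You assert that the objective decrease ``is immediate from Lemma \ref{lemma:relation}'' and that the only delicate point is feasibility of the perturbed schedule. That is backwards. To keep \eqref{eq:tk} and \eqref{eq:ckt} satisfied downstream, the perturbation cannot touch only the $t_i$'s: once $t_{m+1},\dots,t_N$ are pushed later by $\epsilon$, the computation instants $c_{m+1},\dots,c_N$ must in general be pushed later by $\epsilon$ as well, and by Lemma \ref{lemma:relation} increasing the $c_i$'s \emph{increases} $\Delta_T$. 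So the net effect is a competition between the gain from delaying the $t_i$'s and the loss from delaying the $c_i$'s; a direct computation (as in the paper) shows the net change is proportional to $(c_m^*+C_m-t_m^*)-(T-t_N^*)$, and it is exactly hypothesis \eqref{eq:Topt} that makes this non-positive. Lemma \ref{lemma:relation} alone does not determine the sign of the change for this combined shift.

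Conversely, the feasibility issue you flag as the ``main obstacle'' is the easy part and does not need \eqref{eq:Topt} at all: the paper simply caps the shift at $\epsilon=\min\left\{(c_{m-1}^*+C_{m-1})-(t_m^*+T_m),\,T-(c_N^*+C_N)\right\}$, and the second term, which is nonnegative by constraint \eqref{eq:cn} alone, is all that is required for the shifted schedule to meet the deadline. Two smaller points: (i) since $\epsilon\le c_m^*-(t_m^*+T_m)$, the instant $c_m$ itself need not move, so the cascade you describe starts at $c_{m+1}$, not at $c_m$; and (ii) your fallback ``constructive'' route only establishes that a feasible no-wait schedule exists (essentially Proposition \ref{prop:Tfeasi}); it does not show that the \emph{optimal} schedule is no-wait, which is what the lemma claims.
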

\begin{proof}
According to \eqref{eq:ckr}, it is equivalent to prove $c_{k-1}^* + C_{k-1} \le t_k^* + T_k$. We prove it by contradiction. Suppose $c_{k-1}^* + C_{k-1} > t_k^* + T_k$ for a certain $k$. Denote
\begin{align}
\epsilon = \min\left\{(c_{k-1}^* + C_{k-1}) - (t_k^* + T_k),T-(c_N^* + C_N)\right\},\nonumber
\end{align}
and set a new solution for the problem \eqref{eq:prob} as follows
\begin{align}
&\tilde t_i = t_i^*, \tilde c_i = c_i^*, \quad i < k, \nonumber\\
&\tilde t_k = t_k^* + \epsilon, \tilde c_k = c_k^*,\nonumber\\
&\tilde t_i = t_i^* + \epsilon, \tilde c_i = c_i^* + \epsilon, \quad i > k.\nonumber
\end{align}
The above solution satisfies all the constraints \eqref{eq:t1}-\eqref{eq:cn}, and
\begin{align}
\tilde \Delta_T &= \sum_{k=1}^N \frac{1}{2}\left( (\tilde c_k+C_k-\tilde t_{k-1})^2 - (\tilde c_k+C_k-\tilde t_{k})^2\right) \nonumber\\
&\qquad + \frac{1}{2}(T-\tilde t_N)^2 - \frac{1}{2}\Delta_0^2\nonumber\\
&= \Delta_T^* - 2 \epsilon ((T - t_N^*) - (c_k^* + C_k - t_k^*)) \le \Delta_T^*\nonumber
\end{align}
according to \eqref{eq:Topt}. As a result, $\tilde t_1, \cdots, \tilde t_N, \tilde c_1, \cdots, \tilde c_N$ is optimal instead of $t_1^*, \cdots, t_N^*, c_1^*, \cdots, c_N^*$, which contradicts the assumption. Hence, we have $c_{k-1}^* + C_{k-1} \le t_k^* + T_k$ for all $k$, which results in \eqref{eq:ckstar}.
\end{proof}

The intuition behind Lemma \ref{lemma:equal} is that a packet waiting in the edge server will become stale. Therefore, each packet should arrive at the edge server right before the previous one completes its computing process, so that there is no waiting before computing. A policy satisfying \eqref{eq:ckstar} is termed as \emph{no-wait computing} policy. Notice that no-wait computing policy is not always optimal. When $T$ is relatively small, it maybe even not feasible. In Lemma \ref{lemma:equal}, how large value of $T$ is sufficient remains unsolved. In fact, it depends on the values of $T_k$s and $C_k$s. Firstly, we provide a feasibility condition for the no-wait computing policy.

\begin{proposition} \label{prop:Tfeasi}
No-wait computing policy satisfying \eqref{eq:ckstar} is feasible for the problem \eqref{eq:prob} if and only if
\begin{align}
T \ge T_1 + \sum_{k=1}^{N-1} \max\{C_k, T_{k+1}\} + C_N. \label{eq:Tfeasi}
\end{align}
\end{proposition}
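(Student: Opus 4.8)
The plan is to substitute the no-wait condition $c_k = t_k + T_k$ directly into the constraint set and reduce the feasibility question to the existence of transmission instants $t_1, \dots, t_N$ satisfying a single forward recursion of lower bounds together with the deadline constraint on $t_N$. First I would note that under $c_k = t_k + T_k$ the constraint \eqref{eq:ckt} holds automatically (with equality), while \eqref{eq:ck} becomes $t_k + T_k \ge t_{k-1} + T_{k-1} + C_{k-1}$. Combining this with the transmission-ordering constraint \eqref{eq:tk} yields, for $2 \le k \le N$,
\begin{align}
t_k \ge t_{k-1} + T_{k-1} + \max\{0,\, C_{k-1} - T_k\}, \nonumber
\end{align}
so that the only surviving constraints are $t_1 \ge 0$ from \eqref{eq:t1} and $t_N + T_N + C_N \le T$ from \eqref{eq:cn}.

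Next I would exploit the fact that these lower bounds are ordered forward in $k$ and each carries a positive unit coefficient on $t_{k-1}$, so they can all be met simultaneously by the greedy choice $t_1 = 0$ and $t_k = t_{k-1} + T_{k-1} + \max\{0, C_{k-1} - T_k\}$. A short induction then shows every feasible sequence obeys $t_k \ge t_k^{\min}$, where $t_k^{\min}$ denotes this greedy value; in particular
\begin{align}
t_N^{\min} = \sum_{k=1}^{N-1}\left(T_k + \max\{0,\, C_k - T_{k+1}\}\right). \nonumber
\end{align}
Because the binding deadline constraint is $t_N + T_N + C_N \le T$ and the left side is minimized by taking $t_N = t_N^{\min}$, the no-wait policy is feasible if and only if $t_N^{\min} + T_N + C_N \le T$.

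Finally I would perform the algebraic simplification, using the identity $T_{k+1} + \max\{0, C_k - T_{k+1}\} = \max\{C_k, T_{k+1}\}$, to rewrite $t_N^{\min} + T_N + C_N$ as $T_1 + \sum_{k=1}^{N-1}\max\{C_k, T_{k+1}\} + C_N$, which is precisely the right-hand side of \eqref{eq:Tfeasi}. Both directions of the equivalence then drop out: sufficiency by presenting the greedy sequence as an explicit feasible no-wait solution when $T$ satisfies \eqref{eq:Tfeasi}, and necessity from $t_N \ge t_N^{\min}$, which forces $T \ge t_N + T_N + C_N \ge t_N^{\min} + T_N + C_N$.

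The step I expect to be the main obstacle is justifying that the per-step lower bounds can be attained \emph{jointly} and that $t_N^{\min}$ is a valid lower bound over all feasible sequences rather than merely the bound from one constraint; this rests on the monotone, strictly forward structure of the recursion (each bound depends only on the previous instant, with a positive coefficient), which lets a single greedy pass be simultaneously feasible and minimal. Once that structural observation is in place, the remaining telescoping of the sum and the $\max$ identity are routine and require no further case analysis.
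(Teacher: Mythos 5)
Your proposal is correct and follows essentially the same route as the paper: the necessity direction recursively applies the combined lower bound $t_k \ge t_{k-1} + T_{k-1} + \max\{C_{k-1},T_k\} - T_k$ together with $t_1 \ge 0$, and the sufficiency direction exhibits exactly the greedy sequence $\tilde t_1 = 0$, $\tilde t_k = \tilde t_{k-1} + T_{k-1} + \max\{0, C_{k-1}-T_k\}$, $\tilde c_k = \tilde t_k + T_k$ as a feasible no-wait solution. Your explicit identification of $t_N^{\min}$ as a simultaneous lower bound and the telescoping via $T_{k+1} + \max\{0, C_k - T_{k+1}\} = \max\{C_k, T_{k+1}\}$ merely makes the paper's implicit recursion more transparent.
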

\begin{proof}
Firstly, we prove the necessity. If the solution satisfies \eqref{eq:ckstar}, we have based on \eqref{eq:ck}
\begin{align}
t_k \ge t_{k-1} + T_{k-1} + C_{k-1} - T_k.\nonumber
\end{align}
Joint with \eqref{eq:tk}, we have
\begin{align}
t_k \ge t_{k-1} + T_{k-1} + \max\{C_{k-1}, T_k\} - T_k.\nonumber
\end{align}
Therefore, by recursively using the above inequality, we have
\begin{align}
T &\ge c_N + C_N \nonumber\\
&= t_N + T_N + C_N \nonumber\\
&\ge t_{N-1} + T_{N-1} + \max\{C_{N-1}, T_N\} + C_N \nonumber\\
&\ge t_1 + T_1 + \sum_{k=1}^{N-1} \max\{C_k, T_{k+1}\} + C_N \nonumber\\
&\ge T_1 + \sum_{k=1}^{N-1} \max\{C_k, T_{k+1}\} + C_N.\nonumber
\end{align}
Hence, the necessity of the condition \eqref{eq:Tfeasi} is proved.

Secondly, we prove the sufficiency by finding a feasible solution. In particular, we set
\begin{align}
\tilde t_1 &= 0, \nonumber\\
\tilde t_k &= \tilde t_{k-1} + T_{k-1} + \max\{0, C_{k-1} - T_k\}, \quad 2 \le k \le N, \nonumber\\
\tilde c_k &= \tilde t_k + T_k, \quad 1 \le k \le N. \nonumber
\end{align}
It is easy to verify that
\begin{align}
\tilde c_N + C_N  = T_1 + \sum_{k=1}^{N-1} \max\{C_k, T_{k+1}\} + C_N \le T.\nonumber
\end{align}
Hence, $\tilde t_1, \cdots, \tilde t_N, \tilde c_1, \cdots, \tilde c_N$ is a feasible solution of the problem \eqref{eq:prob} conditioned on \eqref{eq:Tfeasi}. As a result, the sufficiency is proved.
\end{proof}

Proposition \ref{prop:Tfeasi} provides a condition to guarantee that \eqref{eq:ckstar} ends up with a feasible solution. To guarantee its optimality, additional conditions are required, which is left for future work. Based on Lemma \ref{lemma:equal}, the optimal solution is no-wait computing policy if $T$ is sufficiently large. We will show the optimality numerically in the next section.

If the condition in Lemma \ref{lemma:equal} holds, the problem \eqref{eq:prob} can be simplified as follows
\begin{subequations} \label{eq:probsimp}
\begin{align}
\min_{t_1, \cdots, t_N} & \sum_{k=1}^N \frac{1}{2}\left[ (t_k-t_{k-1}+T_k + C_k)^2 - (T_k+C_k)^2\right] \nonumber\\
&\qquad + \frac{1}{2}(T-t_N)^2 - \frac{1}{2}\Delta_0^2\\
\mathrm{s.t.}\quad & t_1 \ge 0, \\
& t_k \ge t_{k-1} + T_{k-1}, \quad 2 \le k \le N, \label{eq:tk2}\\
& t_k + T_k \ge t_{k-1} + T_{k-1} + C_{k-1}, \quad 2 \le k \le N, \label{eq:tk3}\\
& t_N + T_N + C_N \le T.
\end{align}
\end{subequations}

By denoting
\begin{align}
x_k &= t_k - t_{k-1}+T_k + C_k, 1\le k \le N, \label{eq:xkk}\\
x_{N+1} &= T - t_N, \label{eq:xn1}
\end{align}
and changing the optimization variables, the problem \eqref{eq:probsimp} can be reformulated as
\begin{subequations}\label{eq:probchange}
\begin{align}
\min_{x_1, \cdots, x_{N+1}} & \sum_{k=1}^{N+1} \frac{1}{2}x_k^2 - \frac{1}{2}\sum_{k=1}^N(T_k+C_k)^2 - \frac{1}{2} \Delta_0^2 \label{eq:pcobj}\\
\mathrm{s.t.}\quad & x_1 \ge \Delta_0 + T_1 + C_1, \\
& x_k \ge T_{k-1} + T_k + C_k, \quad 2 \le k \le N, \label{eq:ak1}\\
& x_k \ge T_{k-1} + C_{k-1} + C_k, \quad 2 \le k \le N, \label{eq:ak2}\\
& x_{N+1} \ge T_N + C_N,\\
& \sum_{k=1}^{N+1} x_k = \Delta_0 + \sum_{k=1}^N(T_k+C_k) + T,\label{eq:sum}
\end{align}
\end{subequations}
where \eqref{eq:sum} is the constraint on the relation between the new variables. Notice that the second and third terms in \eqref{eq:pcobj} are constant which can be removed from the objective, and \eqref{eq:ak1} and \eqref{eq:ak2} can be merged together. By doing so, the optimal solution of the problem \eqref{eq:probchange} is equivalent to that of the following problem
\begin{subequations}\label{eq:pcsimp}
\begin{align}
\min_{x_1, \cdots, x_{N+1}} & \sum_{k=1}^{N+1} x_k^2 \label{eq:obj}\\
\mathrm{s.t.}\quad & x_1 \ge A_1, \label{eq:x1}\\
& x_k \ge A_k, \quad 2 \le k \le N, \label{eq:xk}\\
& x_{N+1} \ge A_N,\label{eq:xn}\\
& \sum_{k=1}^{N+1} x_k = B,\label{eq:sumx}
\end{align}
\end{subequations}
where
\begin{align}
A_1 &= \Delta_0 + T_1 + C_1, \label{eq:a1}\\
A_k &= T_{k-1} + \max\{C_{k-1}, T_k\} + C_k, \quad 2 \le k \le N, \\
A_{N+1} &= T_N + C_N, \\
B &= \Delta_0 + \sum_{k=1}^N(T_k+C_k) + T. \label{eq:b}
\end{align}
The solution to the problem \eqref{eq:pcsimp} can be geometrically interpreted in Fig.~\ref{fig:geo}. In particular, $\bm x = (x_1, \cdots, x_{N+1})^T$ can be viewed as a certain point in $\mathcal R^{N+1}$. The constraints \eqref{eq:x1}-\eqref{eq:xn} can be interpreted by the shaded region as shown in Fig.~\ref{fig:geo}, and the constraint \eqref{eq:sumx} can be interpreted by a hyperplane $l$. Hence, the feasible solution space is the intersection of the hyperplane $l$ and the shaded region, which is a polyhedron. The objective function is the square of the distance from a point $\bm x$ in the polyhedron to the origin. It is known that the minimum distance from the origin to the hyperplane is achieved by the vertical line between the origin and $\bm x^*$. Thus, if $\bm x^*$ lies in the shaded region, it is indeed the optimal solution of problem \eqref{eq:pcsimp}. The following proposition demonstrates that $\bm x^*$ is a feasible solution of the problem \eqref{eq:pcsimp} when $T$ is sufficiently large.

\begin{figure}[t]
\centering
\includegraphics[width=2.8in]{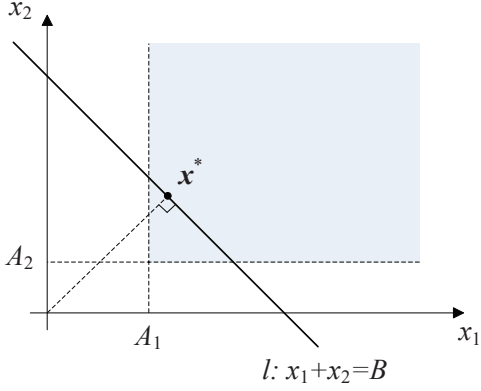}
\caption{Geometric interpretation of the solution for the problem \eqref{eq:pcsimp}.} \label{fig:geo}
\end{figure}

\begin{proposition} \label{thm:opt}
If
\begin{align}
T \ge (N+1) \max_{1\le k \le N+1}\{A_k\} - \sum_{k=1}^N(T_k + C_k) - \Delta_0, \label{eq:Tge}
\end{align}
the optimal solution of the problem \eqref{eq:pcsimp} is $\bm x^* = \left(x_1^*, \cdots, x_{N+1}^*\right)^T$ with
\begin{align}
x_k^* = \frac{B}{N+1}, \quad 1\le k \le N+1,
\end{align}
where $A_k$ and $B$ are defined as \eqref{eq:a1}-\eqref{eq:b}.
\end{proposition}
\begin{proof}
If \eqref{eq:Tge} holds, we have
\begin{align}
x_k^* = \frac{B}{N+1} \ge \max_{1\le k \le N+1} \{A_k\}. \nonumber
\end{align}
Therefore, all the constraints \eqref{eq:x1}, \eqref{eq:xk} and \eqref{eq:xn} are satisfied. Obviously, the constraint \eqref{eq:b} is also satisfied. Hence, $\bm x^*$ is a feasible solution of the problem \eqref{eq:pcsimp}. Since $\bm x^*$ is the intersection point of the hyperplane $l$ and its vertical line from the origin, it attains the minimum distance as in \eqref{eq:pcobj}.
\end{proof}
As the value of $T$ satisfying \eqref{eq:Tge} is sufficiently large, we can infer that the relation \eqref{eq:ckstar} holds for the optimal solution. Consequently, the optimal solution for the original problem \eqref{eq:prob} can be directly obtained according to \eqref{eq:xkk} and \eqref{eq:xn1}, which is given as follows,
\begin{align}
t_k^* &= \frac{k}{N+1}B-\sum_{i=1}^k(T_i + C_i) - \Delta_0, \label{eq:tkstar}\\
c_k^* &= \frac{k}{N+1}B-\sum_{i=1}^{k-1}(T_i + C_i) - C_k - \Delta_0. \label{eq:ckstarr}
\end{align}

Proposition \ref{thm:opt} can be explained as follows. Notice that
\begin{align}
x_k^* &= t_k^* - t_{k-1}^* + T_k + C_k \nonumber\\
&= c_k^* + C_k - t_{k-1}^*\nonumber
\end{align}
is the local maximum value of the AoI curve as shown in Fig.~\ref{fig:aoi}, which is referred to as the peak AoI \cite{costa2014age}. Since all the peak AoIs are the same for all the packets, all the isosceles right-angled triangles with side length $c_k^* + C_k - t_{k-1}^*$ in Fig.~\ref{fig:aoi} are of the same size. Therefore, the average AoI is minimized when the ``contributions" of all the packets are the same.

In the case that $\bm x^*$ does not lie in the shaded region, the optimal solution can be found on the boundary of the hyperspace. It is obvious that the problem \eqref{eq:pcsimp} is a convex optimization problem as the objective is a quadratic function and the constraints are all linear. Therefore, it can be solved by the standard convex optimization algorithms \cite{boyd2004convex}.

\section{Numerical Results}

In this section, we show the behavior of the optimal solution by numerical studies. In the numerical experiment, we set the number of packets $N = 5$, and the set of transmission times as $(T_1, T_2, T_3, T_4, T_5) = (0.5\mathrm{s} ,0.1\mathrm{s} ,0.3\mathrm{s}, 0.7\mathrm{s} ,0.4\mathrm{s})$, the set of the computing times as $(C_1 ,C_2 ,C_3, C_4 ,C_5) = (0.2\mathrm{s}, 0.4\mathrm{s}, 0.3\mathrm{s}, 0.6\mathrm{s} , 0.8\mathrm{s})$ according to \cite{Chen2016Efficient}. The initial age is set to $\Delta_0 = 1\mathrm{s}$. By choosing different values of $T$, the optimal solutions are different. In particular, we solve the AoI minimization problem numerically for $T = 3\mathrm{s}, 5\mathrm{s}$, and $7.5\mathrm{s}$, respectively. The AoI curves and the optimal scheduling solutions are shown in Figs.~\ref{fig:subfig_evo:Tmin}, \ref{fig:subfig_evo:Tmid} and \ref{fig:subfig_evo:Tlar}.

\begin{figure}[t]
    \includegraphics[width=3.4in]{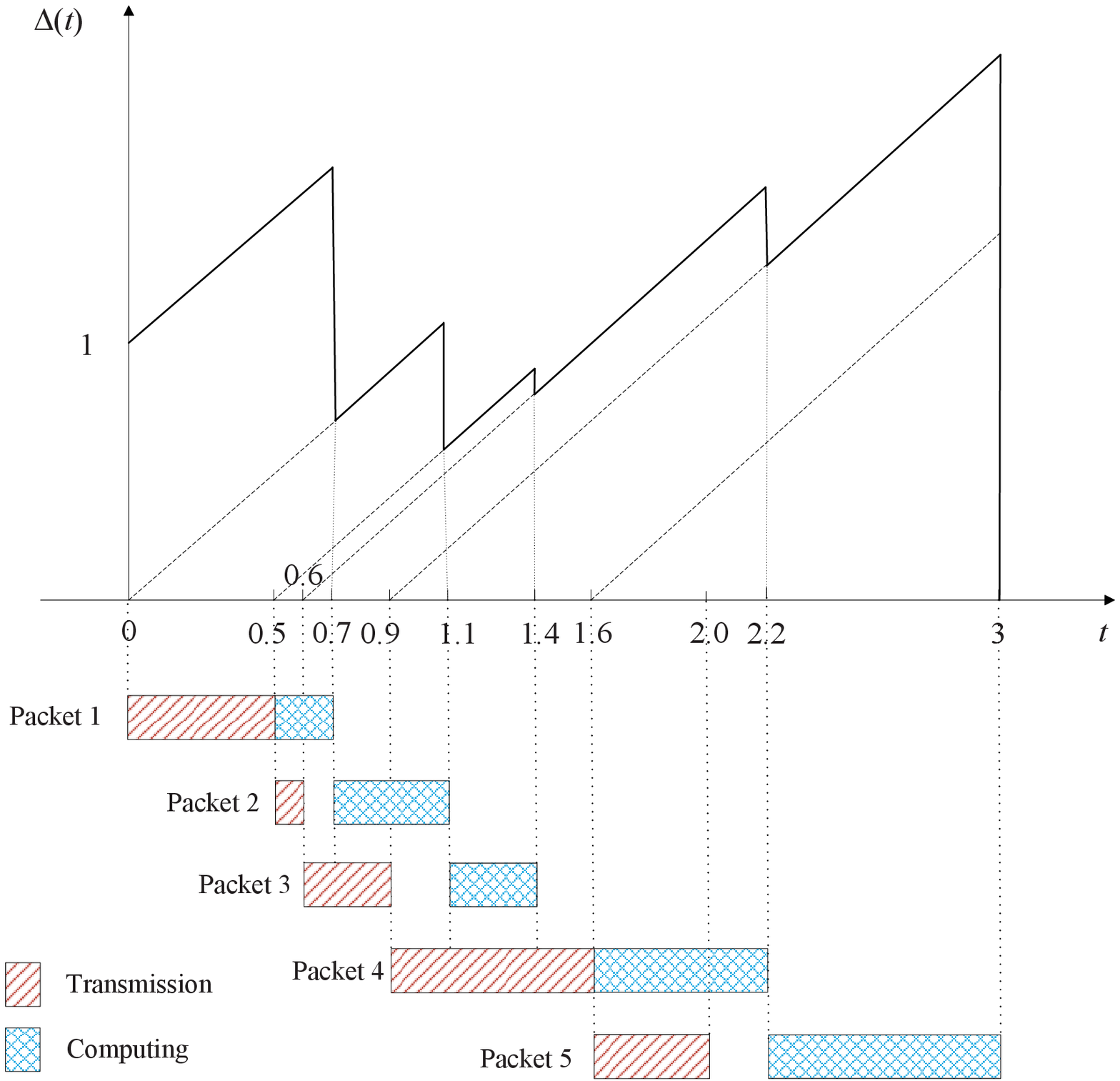}
    \caption{AoI curve $\Delta(t)$ and optimal scheduling policy with $T =3\mathrm{s}$.}
    \label{fig:subfig_evo:Tmin} 
\end{figure}

\begin{figure}[t]
    \includegraphics[width=3.4in]{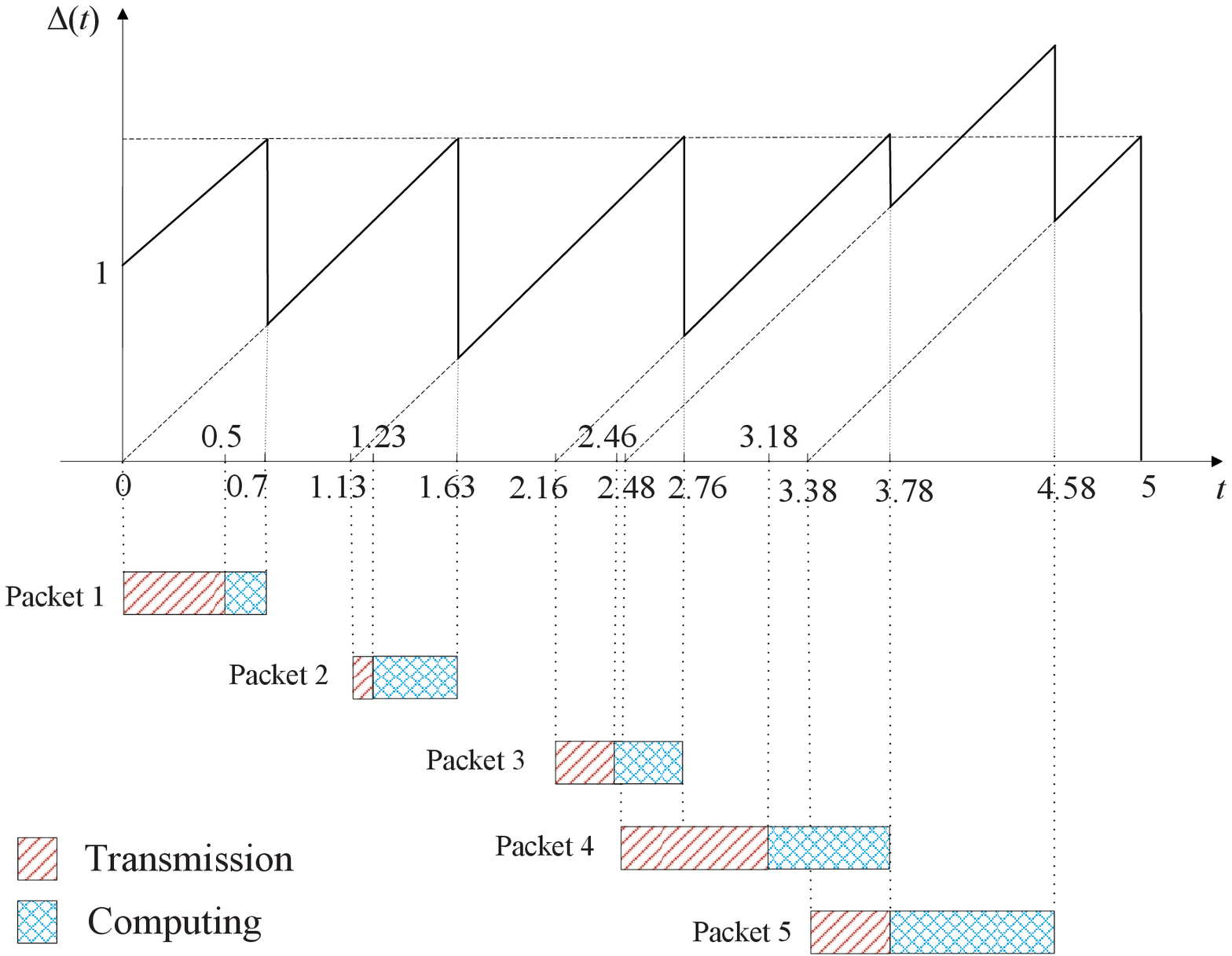}
    \caption{AoI curve $\Delta(t)$ and optimal scheduling policy with $T =5\mathrm{s}$.}
    \label{fig:subfig_evo:Tmid} 
\end{figure}

\begin{figure}[t]
    \includegraphics[width=3.4in]{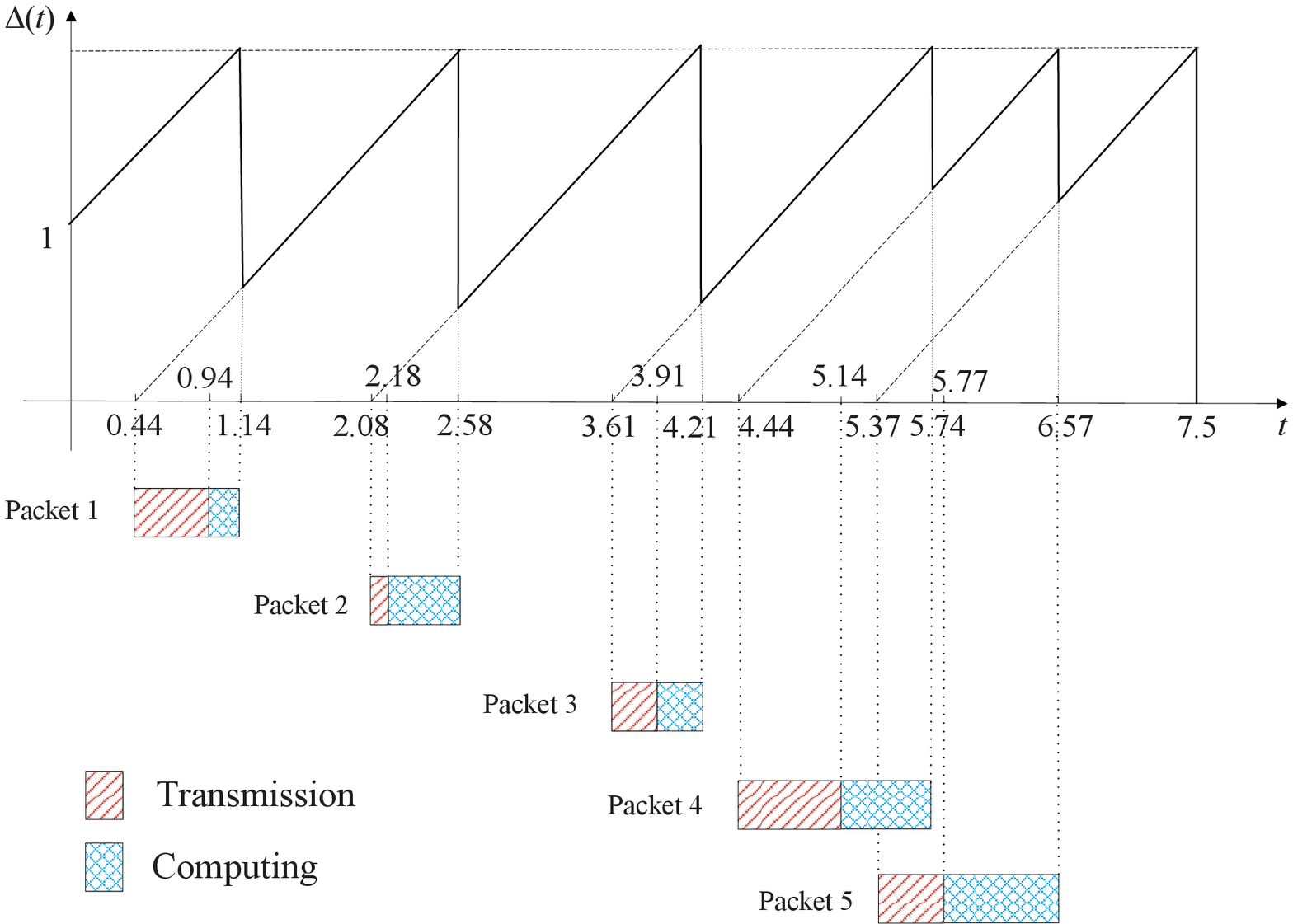}
    \caption{AoI curve $\Delta(t)$ and optimal scheduling policy with $T =7.5\mathrm{s}$.}
    \label{fig:subfig_evo:Tlar} 
\end{figure}

Firstly, according to Proposition \ref{prop:feasi}, to make sure all the packets can be successfully transmitted and computed, we have $T \ge 3\mathrm{s}$. Therefore, when $T=3\mathrm{s}$, an optimal solution is to transmit and compute the packets greedily, i.e., each packet is generated and transmitted upon receiving the previous packet, and each packet is computed as long as the edge server is idle as shown in Fig.~\ref{fig:subfig_evo:Tmin}. It is noticeable that $c_k = t_k + T_k$ does not always hold due to the limited time length. It is also remarkable that there may exist other feasible optimal solutions in this case. As shown in this figure, the fifth packet can start to transmit during time interval $(1.6\mathrm{s}, 1.8\mathrm{s})$ without changing the average AoI.

Secondly, when $T = 5\mathrm{s}$, the optimal solution is depicted in Fig.~\ref{fig:subfig_evo:Tmid}. It can be seen that $c_k = t_k + T_k$ holds for all $k$. Therefore, $T = 5\mathrm{s}$ is sufficiently large so that the optimal solution is no-wait computing policy. Finally, in the case that $T=7.5\mathrm{s}$, the condition \eqref{eq:Tge} in Proposition \ref{thm:opt} is satisfied. Hence, the optimal solution is solved in closed-form as \eqref{eq:tkstar} and \eqref{eq:ckstarr}. The numerical result in Fig.~\ref{fig:subfig_evo:Tlar} validates our theoretical analysis.


By observing the AoI curves with different $T$s depicted in Figs.~\ref{fig:subfig_evo:Tmin}, \ref{fig:subfig_evo:Tmid} and \ref{fig:subfig_evo:Tlar}, it can be seen that when $T =3\mathrm{s}$, the peak AoI for each packet varies with one another. As $T$ increases, the variance among the peak AoIs becomes small. In particular, when $T=5\mathrm{s}$, all the peak AoIs are the same except for the fourth packet. And when $T = 7.5\mathrm{s}$, the peak AoIs are all the same, which is consistent with the explanation of Proposition \ref{thm:opt}. It is demonstrated that the more ``flat" the AoI curve is, the smaller the average AoI is.

%
%

\section{Conclusion}
In this paper, we have studied how to schedule the transmission and computing of a set of packets in a serial way to minimize the average AoI. The optimal solution strongly depends on the time deadline. If the deadline is just enough for completing the transmission and computing of all the packets, there is no space for adjustment. If the deadline is sufficiently large, all the peak AoIs are the same so that the average AoI is minimized, and the optimal solution can be given in closed-form. For moderate length of the deadline, to minimize the average AoI, each packet should start computing right after it is received by the edge server so that no waiting occurs in the buffer of the edge server, referred to as no-wait computing policy. Future work may include finding necessary and sufficient condition to guarantee that no-wait computing policy is optimal, and characterizing the optimal solution on the boundary of the hyperplane.

\bibliographystyle{IEEEtran}
\bibliography{ref}

\end{document}